\documentclass{amsart}
\usepackage[T1]{fontenc}
\usepackage{amsfonts}
\usepackage{amssymb}
\usepackage{multirow} 
\usepackage{longtable} 
\usepackage{array} 
\usepackage{makecell} 
\usepackage{threeparttable}
\usepackage{tablefootnote}
\usepackage[colorlinks=true,linkcolor=blue,allcolors=blue
]{hyperref}
\usepackage{amsmath,blkarray,booktabs,bigstrut}
\usepackage{tikz}
\usepackage{tikz-cd}
\usepackage{graphicx}
\usepackage{dutchcal}
\usepackage{algorithm}
\usepackage{algpseudocode}
\usepackage{soul}
\usepackage{cleveref}
\numberwithin{equation}{section}

\theoremstyle{plain}
  \newtheorem{theorem}[subsection]{Theorem}
  \newtheorem{proposition}[subsection]{Proposition}
  \newtheorem{lemma}[subsection]{Lemma}
  \newtheorem{corollary}[subsection]{Corollary}

\theoremstyle{definition}

\newcommand\bR{{\mathbb R}}
\newcommand{\ip}[2]{\left<#1,#2\right>}
\newcommand{\norm}[1]{\left\Vert#1\right\Vert}
\newcommand{\abs}[1]{\left\vert#1\right\vert}
\newcommand{\rar}{\rightarrow}
  
\begin{document}
\title[Taylor Diagram and Wasserstein Distance]{Taylor Diagram and Wasserstein Distance for Model Evaluation}
\keywords{Taylor diagram; model evaluation; Wasserstein distance; Wasserstein-Taylor diagram; the law of cosines; quantile correlation coefficient}

\author{Dongwei Chen}
\address{Department of Mathematics and Statistical Science, University of Idaho, Moscow, ID, USA, 83844}
\email{dongweic@uidaho.edu}

\author{Emily J. King}
\address{Department of Mathematics, Colorado State University, Fort Collins, CO, USA, 80523}
\email{emily.king@colostate.edu}

\author{Hungjui Yu}
\address{Cooperative Institute for Research in the Atmosphere, Colorado State University, Fort Collins, CO, USA, 80523; Department of Atmospheric Science, Colorado State University, Fort Collins, CO, USA, 80523}
\email{hungjui@colostate.edu}

\begin{abstract}
 The Taylor diagram is used to evaluate and compare predictive models with observed data and has many applications in climate and environmental sciences. 
 Three statistics of interest--the centered root-mean-squared error, standard deviation, and the product-moment correlation coefficient--are related by the law of cosines; Taylor diagrams leverage this fact to allow visualization of all three statistics in a two-dimensional plot without any loss of information.
 In this work, we present a novel model evaluation tool--the Wasserstein-Taylor diagram--formed from integrating the Wasserstein distance from optimal transport into a Taylor diagram framework.
 This tool is built upon the fact that the law of cosines still holds in the Taylor diagram if the centered root-mean-squared error and  product-moment correlation coefficient are replaced by the centered 2-Wasserstein distance and quantile correlation coefficient, respectively. 
 The advantage of this Wasserstein-Taylor diagram is that one can use the distribution perspective afforded by the Wasserstein distance to compare observed and predicted data sets of different sizes and even to compare statistical models with only observed data. 
 We further show that the quantile correlation coefficient on empirical measures converges to the quantile correlation coefficient on the sampled measures.
\end{abstract}

\subjclass[2020]{Primary 49Q22; Secondary 86A10, 86A08}

\maketitle

\section{Introduction and Main Results}

The Taylor diagram is an effective visualization tool for summarizing how well a model prediction matches an observation or ground truth using three key statistics: product-moment correlation coefficient, standard deviation, and centered root-mean-square error (centered RMSE). These metrics are displayed together in a single polar plot, enabling multiple models or experiments to be compared against the same reference at a glance \cite{taylor-2001}. 
In climate and atmospheric sciences, the Taylor diagram has become a standard approach for multi-model evaluation and for assessing the impact of different configurations or parameterizations on weather and climate model performance \cite{gleckler-2008, jiang-2012, khosravi-2025}. 
The Taylor diagram is able to represent these three evaluation metrics in a two-dimensional plot due to the law of cosines. 
The azimuthal angle corresponds to the inverse cosine of the product-moment correlation coefficient between the model and observations, the radial distance from the origin indicates the standard deviation of the model prediction, and the Euclidean distance between a model point and the reference point is the centered RMSE. 
This construction allows diverse model results to be visualized simultaneously, facilitating direct comparison of pattern similarity, amplitude of variability, and overall discrepancy in an intuitive and concise format \cite{taylor-2001}. 
The mathematical construction of the Taylor diagram is given as follows.

Suppose $X=(x_1,\dots,x_N)$ is a non-constant sequence of ground truth observations and  $Y=(y_1,\dots,y_N)$ is a corresponding non-constant sequence of model predictions. 
Define the means as 
\[
\overline{x} := \frac1N\sum\limits_{i=1}^N x_i \quad \textrm{and} \quad\overline{y}: = \frac1N\sum\limits_{i=1}^N y_i.
\]
Then the associated centered data sequences are given by $X'=(x'_1,\dots,x'_N)$ and $Y'=(y'_1,\dots,y'_N)$ where
$$
x_i' = x_i - \overline{x} \ \text{and} \ y_i' = y_i - \overline{y},
$$
yielding the associated sample variances 
$$
\sigma_x^2 = \frac1N \sum_{i=1}^N (x_i')^2>0 \ \text{and} \
\sigma_y^2 = \frac1N \sum_{i=1}^N (y_i')^2>0.
$$
It is well-known that the  product-moment correlation coefficient is just the cosine similarity between the centered data $X'$ and $Y'$; that is, 
\begin{equation}\label{eqn:defncorrcoeff1}
\rho := \rho(X,Y) = \frac{\frac1N\sum_{i=1}^N x_i'y_i'}{\sigma_x \sigma_y} = \frac{\langle X', Y' \rangle_{\ell^2(N)}}{ \|X'\|_{\ell^2(N)} \|Y'\|_{\ell^2(N)}} = \cos (\theta),
\end{equation}
where $\theta$ is the angle between centered data $X'$ and $Y'$. 
Also note that the centered RMSE between $X$ and $Y$ is given by
\[
E := E(X',Y') = \left( \frac1N \sum_{i=1}^N (x_i' - y_i')^2 \right)^{1/2}=\frac{1}{\sqrt{N}} \norm{X'-Y'}_{\ell^2(N)}.
\]
By expanding the formula defining the centered RMSE, we have 
\begin{equation}\label{TaylorDiagram}
E^2 
 = \sigma_x^2 + \sigma_y^2 - 2\,\frac1N\sum_{i=1}^N x_i'y_i'= \sigma_x^2 + \sigma_y^2 - 2 \sigma_x \sigma_y \rho 
 = \sigma_x^2 + \sigma_y^2 - 2\sigma_x\sigma_y\cos(\theta),
\end{equation}
which indeed satisfies the law of cosines in a triangle with side lengths $E$, $\sigma_x$, and $\sigma_y$. 
The relationship between $E$, $\sigma_x$, $\sigma_y$, and $\rho$ is visualized in Figure~\ref{fig:lawcos}.
\begin{figure}[ht]
    \centering
    \includegraphics[width=0.5\linewidth]{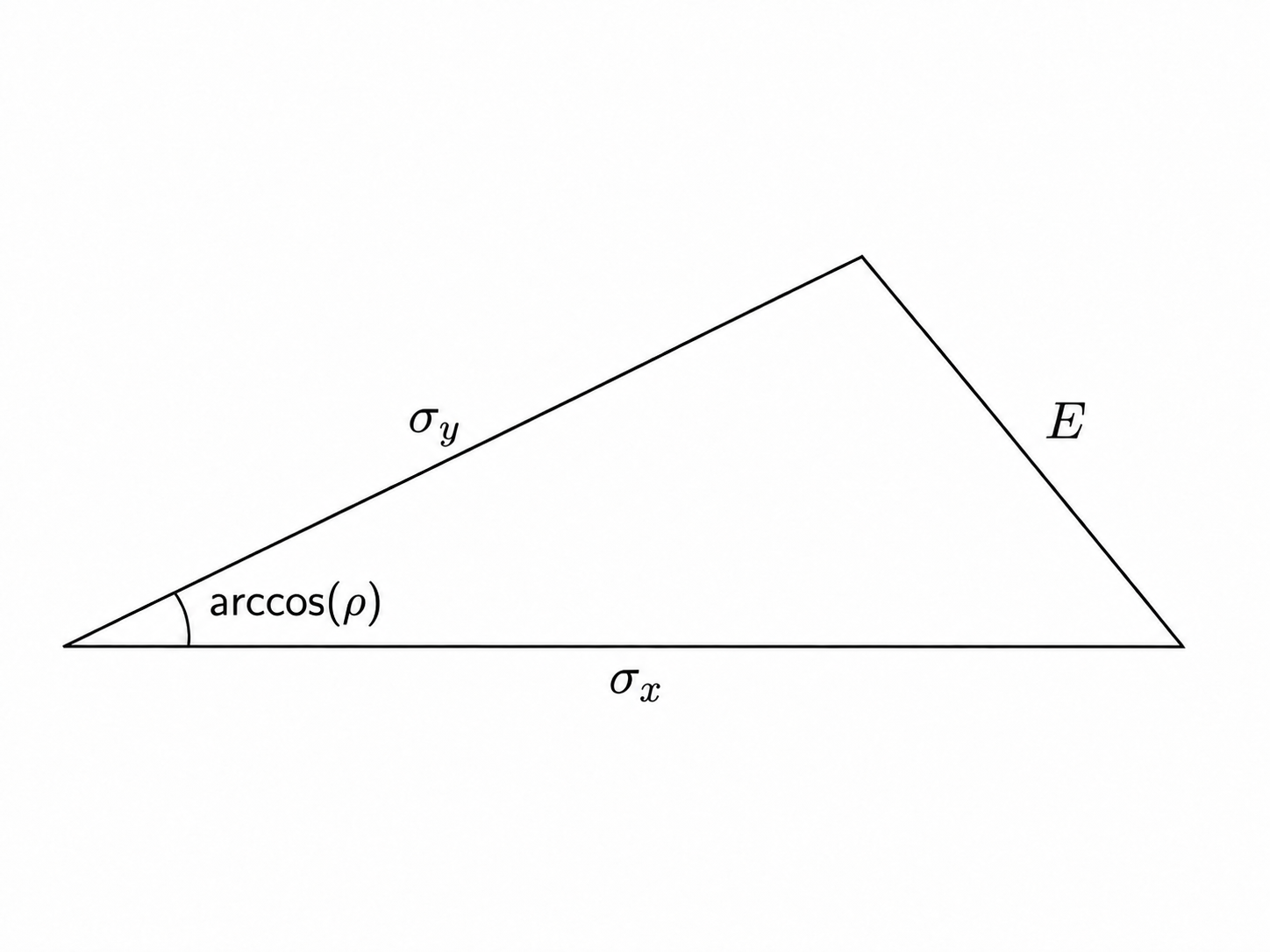}
    \caption{Centered RMSE $E$, standard deviation of observations $\sigma_x$, standard deviation of model predictions $\sigma_y$, and the inverse cosine of the product-moment correlation coefficient $\arccos(\rho)$ satisfy the law of cosines.}
    \label{fig:lawcos}
\end{figure}

Figure \ref{fig:Taylor-example} shows an example of a Taylor diagram. 
For each model, $\sigma_x$ is the same as it is the standard deviation of the ground truth.
So, we can view the Taylor diagram as resulting from gluing together triangles which all share a base and which are of the form seen in Figure~\ref{fig:lawcos}.
\begin{figure}[ht]
    \centering
    \includegraphics[scale=0.5]{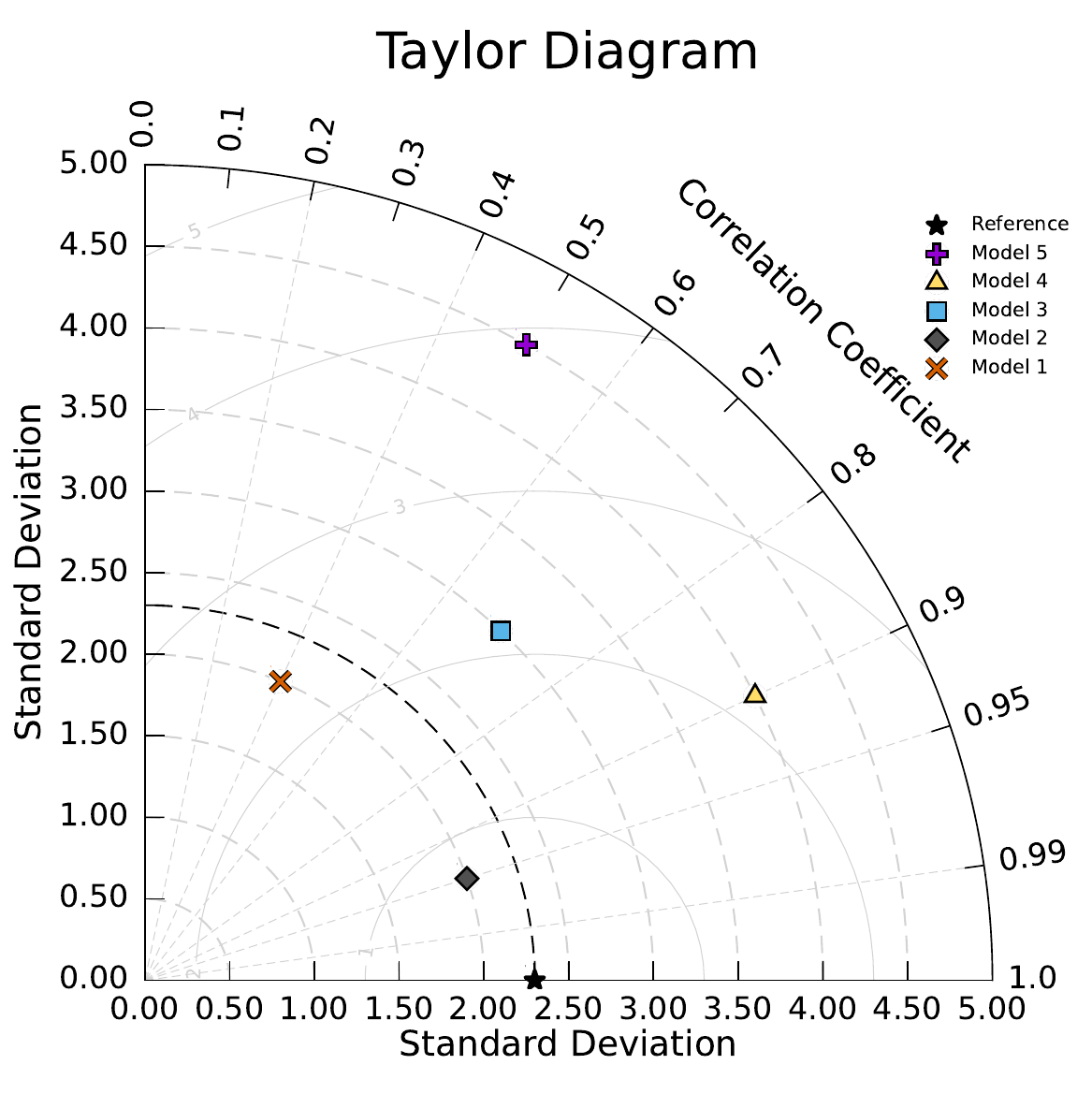} 
 \caption{An example of a Taylor diagram.  In the diagram, the standard deviation (STD) of the reference data or ground truth is 2.3. The STD of the output data of Model 1 is 2, and the associated  product-moment correlation coefficient with the reference data is 0.4. For Model 2, the STD is still 2 but the product-moment correlation coefficient is 0.95. The STD of Model 3 is 3 and the corresponding product-moment correlation coefficient is 0.7.  Similarly,  the STD of Model 4 is 4 and the product-moment correlation coefficient is 0.9. Finally, Model 5 has the largest STD as 4.5 while the product-moment correlation coefficient is 0.5. Since Model 2 has the closest STD to the reference STD, highest product-moment correlation coefficient, and is closest to the reference point, Model 2 has the best performance compared to other models according to the Taylor-diagram criterion. 
}    
 \label{fig:Taylor-example}
\end{figure}
The Taylor diagram allows an easy visual comparison of models, including those which may share a statistical value, like having the same centered RMSE.

Note that the Taylor diagram requires that both observed and predicted data sequences have the \emph{same} number of elements (and that those elements are paired in a particular order). 
Otherwise, the product-moment correlation coefficient and centered RMSE are not well-defined. 
However, due to factors that arise often in practice, such as sensor errors, there are typically many missing values in observed data. 
This means that the observed and predicted data sequences have \emph{different} sizes, and thus the Taylor diagram cannot be used in this scenario. 
It is natural to ask how to compare the performance of different models even when some observations are missing and the observed and model-predicted data sequences have different sizes.   
In particular, in some cases, such as wind speed probability distribution modeling \cite{yang2025beta}, we only have the observed data and are interested in finding an appropriate statistical distribution to model the data, which naturally requires the comparison of different distributions. 
The above analysis indicates that the Taylor diagram does not work in many cases, including those that have different sample sizes,
different temporal resolutions, and missing data cases, as well as in the case in which one wants to compare model distributions with the empirical distribution.

The main contribution of this work is the introduction of Wasserstein-Taylor diagrams, which lift the framework of Taylor diagrams to apply to probability distributions using a Wasserstein metric.
The geometric representation relies on Lemma~\ref{WTaylorDiagram} (a special case of a decomposition in \cite{irpino2007optimal}); namely, that the law of cosines still holds if the centered RMSE is replaced by the centered $2$-Wasserstein distance and the product-moment correlation coefficient with the quantile correlation coefficient. 
The former is a metric between centered measures with finite second moments and the latter the cosine similarity between quantile functions of those centered measures.
The advantage is that one can use the Wasserstein distance to compare observed and predicted data sequences with different sizes or when order does not matter and to compare statistical distributions when only observed data are available. 
We further prove in Theorem~\ref{thm:samplestat} that the quantile correlation coefficients of empirical measures converge to the quantile correlation coefficient of the sampled measure almost surely.

This paper is organized as follows. 
We begin by introducing the necessary background concerning the Wasserstein distance and optimal transport in Section \ref{sec:preliminary}. 
Next, in Section \ref{sec:mainTheorem}, we introduce Wasserstein-Taylor diagrams and explain their motivation and advantages.
The fact that the law of cosines holds for any two probability measures with finite second moments and positive variance allows us to create this two-dimensional visualization in Wasserstein-Taylor diagrams.
We thus give a self-contained proof that the law of cosines holds for Wasserstein distance and quantile correlation in Lemma \ref{WTaylorDiagram}. 
Then we show in Proposition~\ref{prop:sort} that the quantile correlation coefficient of measures assigning $1/N$ times a Dirac delta to points in data sequences of cardinality $N$ maximizes the product-moment correlation of all possible permutations of the support.
Next in Theorem~\ref{thm:samplestat}, we prove that the quantile correlation coefficient of empirical measures converges almost surely to the quantile correlation coefficient of the sampled measures.

We present applications of the Wasserstein-Taylor diagram in Section \ref{sec:application}. 
We first evaluate machine learning models for convective cloud fraction in climate science in Section~\ref{sec:climate}. 
In this use, two interpretable methods are more clearly separated under the distributional criterion of the Wasserstein-Taylor diagram than the Taylor diagram.
Then in Section~\ref{sec:wind} we use the Wasserstein-Taylor diagram to evaluate the wind speed distribution fitting. 
We also introduce the so-called normalized Wasserstein-Taylor diagram in Section~\ref{sec:normalized} to compare the relative performance of distributions for several variables in a single diagram. 
Finally, in Section \ref{sec:summary}, we provide a summary and discussion.

\section{Preliminaries }\label{sec:preliminary}
Let $\mathcal{P}(\mathbb{R})$ be the set of Borel probability measures on the Euclidean space $\mathbb{R}$ and let $\mathcal{P}_2(\mathbb{R}) \subset \mathcal{P}(\mathbb{R})$  be the set of Borel probability measures with finite $2$nd moment, i.e., 
\begin{equation*}
   \int_{\mathbb{R}} \vert x \vert ^2 d\mu(x) < + \infty.
\end{equation*}

If $\mu \in  \mathcal{P}(\mathbb{R}) $ and $f: \mathbb{R} \rightarrow \mathbb{R}$ is a Borel measurable map, then $f_{\#} \mu \in \mathcal{P}(\mathbb{R})$ is called the \textit{pushforward} of $\mu$ by  the map $f$ and is defined as
$$f_{\#} \mu (A) : = \mu \big (f^{-1}(A) \big ) \ \text{for any Borel set}  \ A \subset \mathbb{R}.$$ 
Given $\mu \in  \mathcal{P}(\mathbb{R})$, we define the  \emph{centered measure} of $\mu$ as
\begin{equation*}
   \widetilde{\mu} := {\tau_\mu}_\# \mu,
\end{equation*}
where $\tau_\mu(x) = x -m_\mu$ and $m_\mu$ is the mean of measure $\mu$ given by
$$m_\mu = \int_{\mathbb{R}} x d\mu(x).$$

Furthermore, given two probability measures $\mu, \nu \in \mathcal{P}(\mathbb{R})$, $\Gamma(\mu,\nu)$ is denoted as the set of transport couplings with marginals $\mu$ and $\nu$, that is,
$$ \Gamma(\mu,\nu) :=  \left \{ \gamma \in \mathcal{P}(\mathbb{R} \times \mathbb{R}): {\pi_{{ x}}}_{\#} \gamma = \mu, \ {\pi_{{ y}}}_{\#} \gamma = \nu \right \},$$
where $\pi_{{ x}}$, $\pi_{{ y}}$ are orthogonal projections onto the $x$ and $y$ coordinates: for any $(x,y) \in \mathbb{R} \times \mathbb{R} $, $\pi_{{ x}}(x, y) = x$ and $ \pi_{{ y}}(x,y) = y$. 

Then, for $\mu,  \nu \in \mathcal{P}_2(\mathbb{R})$, the $2$-Wasserstein distance $ W_2(\mu,\nu)$ is used to quantify the similarities and discrepancies between $\mu$ and $\nu$:  
$$ W_2(\mu,\nu) := \left(\underset{\gamma \in \Gamma(\mu,\nu)}{\operatorname{inf}}  \int_{\mathbb{R} \times \mathbb{R}}  \left \vert x-y\right \vert^2 \ d\gamma(x,y)\right)^{1/2}$$
and is thus of interest in optimal transport.
It is well-known that the $2$-Wasserstein distance is a metric on $\mathcal{P}_2(\mathbb{R})$. 
See~\cite{figalli2021invitation} for details on optimal transport and Wasserstein distances.

Let $\mu\in \mathcal{P}_2(\mathbb{R})$. 
Then its \emph{quantile function} $F^{-1}: (0,1) \rar \bR$ is defined as
\begin{equation*}
    F^{-1} (t) = \inf \left\{x\in\mathbb{R}: \mu(-\infty, x]\geq t\right\}.
\end{equation*}
The quantile function  is the generalized inverse of the cumulative distribution function, which explains the $-1$ exponent in the notation.  
We note a few key properties of quantile functions 
 which we will make use of. Recall that if $U$ is uniformly distributed on $(0,1)$, then the inverse 
transform sampling theorem gives $F^{-1}(U)\sim\mu$. Consequently, for every 
$\mu$-integrable function $g$, we have 
$$\int_{\mathbb{R}} g(x) d\mu(x)=\int_0^1 g(F^{-1}(t)) dt.$$ 
In particular, for $\mu\in\mathcal{P}_2(\mathbb{R})$ with quantile function $F^{-1}$ and its centered measure $\tilde{\mu}$ with quantile function $\tilde{F}^{-1}$,
\[
m_\mu = \int_0^1 F^{-1}(t)\, dt \quad \textrm{and} \quad  \sigma_\mu^2  = \norm{F^{-1}}_{L^2(0,1)}^2 - m_\mu^2 = \norm{\tilde{F}^{-1}}_{L^2(0,1)}^2
\]
for $m_\mu$ the expected value of $\mu$ and $\sigma_\mu$ the standard deviation.
Additionally, if $\nu\in \mathcal{P}_2(\mathbb{R})$ has quantile function $G^{-1}$, then
\[
W_2^2(\mu,\nu)
= \int_0^1 |F^{-1}(t) - G^{-1}(t)|^2 dt.
\]

Quantile functions may also be used to define a correlation between measures in $\mathcal{P}_2(\bR)$ called the \emph{quantile correlation coefficient} \cite{irpino2015basic,irpino2015linear}.
The quantile correlation coefficient $\rho_q$ between $\mu$ and $\nu$ in $\mathcal{P}_2(\bR)$ with $\sigma_\mu, \sigma_\nu >0$ is just the cosine similarity between the quantile functions $\widetilde{F}^{-1}$ and $\widetilde{G}^{-1}$ for the centered measures $\widetilde{\mu}$ and $\widetilde{\nu}$: 
\begin{equation}\label{eqn:defncorrcoeff2}
\rho_q = \rho_q(\mu, \nu) = \cos (\theta) = \frac{\langle \widetilde{F}^{-1}, \widetilde{G}^{-1}\rangle_{L^2(0,1)}}{\| \widetilde{F}^{-1}\|_{L^2(0,1)} \| \widetilde{G}^{-1}\|_{L^2(0,1)}} = \frac{\langle \widetilde{F}^{-1}, \widetilde{G}^{-1}\rangle_{L^2(0,1)}}{\sigma_\mu \sigma_\nu},
\end{equation}
where $\theta$ is the angle between $\widetilde{F}^{-1}$ and $\widetilde{G}^{-1}$ in $L^2(0,1)$, and $\sigma_\mu$ and $\sigma_\nu$ are standard deviations of $\mu$ and $\nu$, respectively. 
Note that given $\mu, \nu \in \mathcal{P}_2(\mathbb{R})$ with quantile functions $ F^{-1}$ and $G^{-1}$, Irpino and Romano in \cite{irpino2007optimal}  proposed the following decomposition for $W_2^2(\mu,\nu)$:
\begin{align}
    W_2^2(\mu, \nu)  &= \underbrace{(m_\mu - m_\nu)^2}_{\textrm{location}} + \underbrace{(\sigma_\mu - \sigma_\nu)^2}_{\textrm{size}}+ \underbrace{2 \sigma_\mu \sigma_\nu (1-\rho_q)}_{\textrm{shape}} \nonumber\\ 
    &= (m_\mu - m_\nu)^2 + W_2^2(\tilde{\mu},\tilde{\nu}).\label{eq:W2Decomposition}
\end{align}
Also see \cite{irpino2015linear, schefzik2021fast} for more applications of such a decomposition. 
The quantile correlation coefficient appears in the fixed marginal literature as the maximum correlation of a joint bivariate distribution with marginals $\mu$ and $\nu$ (e.g., \cite{lin2014recent}). The general upper bound goes back to Hoeffding~\cite{hoeffding1940masstabinvariante} and Fr{\'e}chet \cite{frechet1951correlation}.
Note that the term ``quantile correlation'' sometimes refers to a different statistic \cite{choi2022quantile}.

\section{Wasserstein-Taylor Diagram}\label{sec:mainTheorem}

The Taylor diagram admits
a law-of-cosines decomposition involving the standard deviation, the  product-moment correlation coefficient, and centered RMSE.  
In this section, we show that an analogous identity holds for the 
centered $2$-Wasserstein distance between two probability measures
using the closed-form quantile representation of the Wasserstein distance. This yields the so-called \emph{Wasserstein-Taylor diagram}, where the standard deviations, centered $2$-Wasserstein distance, and the quantile correlation coefficient satisfy the law of cosines, as illustrated in Figure \ref{fig:comparison-explaination}.

\begin{figure}[ht]
    \centering
    \includegraphics[scale=0.8]{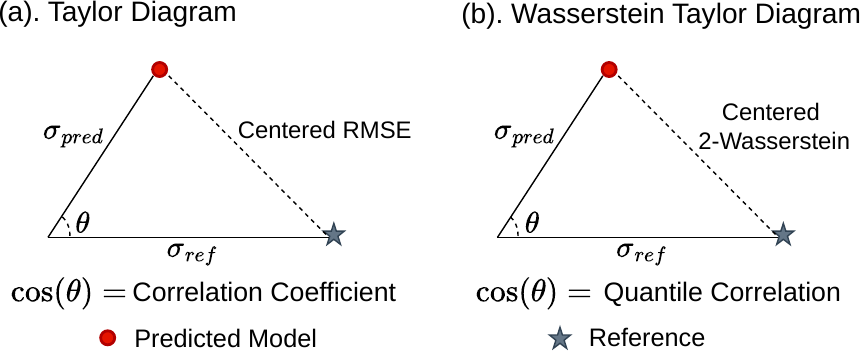} 
    \caption{Comparison of Taylor diagrams and Wasserstein-Taylor diagrams. $\sigma_{pred}$ and  $\sigma_{ref}$ are the standard deviations of the predicted model and the reference ground truth, respectively. In the Taylor diagram,  $\sigma_{pred}$,  $\sigma_{ref}$, and centered root-mean-square error (centered RMSE) satisfy the law of cosines, and $\cos(\theta)$ is the product-moment correlation coefficient between predicted data and reference ground truth data. In the Wasserstein-Taylor diagram,  $\sigma_{pred}$,  $\sigma_{ref}$, and centered 2-Wasserstein distance (between the predicted data distribution and the reference data distribution) satisfy the law of cosines, and $\cos(\theta)$ is the quantile correlation.}
    \label{fig:comparison-explaination}
\end{figure}

Similar to the decomposition of the Taylor diagram, Lemma \ref{WTaylorDiagram} shows the geometric and mathematical framework of the law of cosines in the Wasserstein-Taylor diagram.
Although Lemma \ref{WTaylorDiagram} follows from the decomposition in Equation~\eqref{eq:W2Decomposition} (which was proposed in \cite{irpino2007optimal}) applied to centered measures (i.e., those with $m_\mu = m_\nu = 0$), we include a proof here as the geometric interpretation vis-\`a-vis the law of cosines is critical to the construction of the Wasserstein-Taylor diagram.

\begin{lemma}\label{WTaylorDiagram}
Suppose $\mu, \nu \in \mathcal{P}_2(\mathbb{R})$ with $\sigma_\mu,\sigma_\nu>0$ and quantile functions $ F^{-1}$ and $G^{-1}$ and centered measures $\widetilde{\mu} $ and $ \widetilde{\nu}$. Then we have the following law of cosines 
\begin{equation}\label{eqn:WTaylorDiagram}
    W_2^2(\widetilde{\mu}, \widetilde{\nu})  = \sigma_\mu^2 + \sigma_\nu^2 -2 \sigma_\mu \sigma_\nu \cos(\theta) = \sigma_\mu^2 + \sigma_\nu^2 -2 \sigma_\mu \sigma_\nu \rho_q ,
\end{equation}
where $\theta$ is the angle between quantile functions $\widetilde{F}^{-1}$ and $\widetilde{G}^{-1}$ for $\widetilde{\mu}$ and $ \widetilde{\nu}$  in $L^2(0,1)$, and $\rho_q$ is the quantile correlation coefficient between $\mu$ and $\nu$. 
\end{lemma}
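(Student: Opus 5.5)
The plan is to work entirely in the Hilbert space $L^2(0,1)$, using the quantile representation of $W_2$ recalled in Section~\ref{sec:preliminary}, and then expand a squared norm exactly as in the derivation of \eqref{TaylorDiagram}.

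\textbf{Step 1: the centered measures have centered quantile functions.}
First I identify the quantile function of $\widetilde{\mu}$. Since $\tau_\mu(x)=x-m_\mu$ is a strictly increasing translation, we have $\widetilde{\mu}(-\infty,x]=\mu(-\infty,x+m_\mu]$. Substituting into the infimum defining the quantile function gives
\[
\widetilde{F}^{-1}(t)=F^{-1}(t)-m_\mu ,
\]
and likewise $\widetilde{G}^{-1}(t)=G^{-1}(t)-m_\nu$.

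Both centered measures lie in $\mathcal{P}_2(\bR)$. By the preliminaries,
\[
\norm{\widetilde{F}^{-1}}_{L^2(0,1)}=\sigma_\mu>0
\quad\text{and}\quad
\norm{\widetilde{G}^{-1}}_{L^2(0,1)}=\sigma_\nu>0 .
\]
So $\theta$ is well defined and $\rho_q=\cos(\theta)$ by \eqref{eqn:defncorrcoeff2}.

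\textbf{Step 2: expand the squared distance.}
Apply the closed form $W_2^2(\mu,\nu)=\int_0^1|F^{-1}-G^{-1}|^2\,dt$ to the pair $\widetilde{\mu},\widetilde{\nu}$:
\[
W_2^2(\widetilde{\mu},\widetilde{\nu})=\norm{\widetilde{F}^{-1}-\widetilde{G}^{-1}}_{L^2(0,1)}^2
=\norm{\widetilde{F}^{-1}}^2+\norm{\widetilde{G}^{-1}}^2-2\ip{\widetilde{F}^{-1}}{\widetilde{G}^{-1}}.
\]
Expanding the squared norm is legitimate because both functions are in $L^2(0,1)$, so their inner product is finite by Cauchy--Schwarz.

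\textbf{Step 3: substitute.}
Replace the two norms by $\sigma_\mu^2$ and $\sigma_\nu^2$, and write the inner product as $\sigma_\mu\sigma_\nu\rho_q=\sigma_\mu\sigma_\nu\cos(\theta)$ using the definition \eqref{eqn:defncorrcoeff2}. This yields \eqref{eqn:WTaylorDiagram}.

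\textbf{Main obstacle.}
The only step needing care is Step 1, the translation identity for generalized inverses. It follows directly from the infimum definition, because the shift preserves order. Everything else is the same Hilbert-space computation as for the classical Taylor diagram.
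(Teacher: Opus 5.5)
Your proposal is correct and follows essentially the same route as the paper: you identify $\widetilde{F}^{-1}=F^{-1}-m_\mu$, apply the quantile closed form for $W_2$ to the centered measures, expand the squared $L^2(0,1)$ norm, and substitute $\sigma_\mu^2$, $\sigma_\nu^2$ and the cosine-similarity definition of $\rho_q$. Your explicit check of the translation identity via $\widetilde{\mu}(-\infty,x]=\mu(-\infty,x+m_\mu]$ fills in a step that the paper simply asserts.
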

\begin{proof}
Note that the mean values of $\mu$ and $\nu$ are given as
$$
m_\mu = \int_0^1 F^{-1}(t)  dt \ \text{and} \ m_\nu = \int_0^1 G^{-1}(t) dt.
$$
Then the associated centered measures are given as $   \widetilde{\mu} = {\tau_\mu}_\# \mu $  and $ \widetilde{\nu} =  {\tau_\nu}_\# \nu$
where $\tau_\mu(x) = x -m_\mu$ and $\tau_\nu(x) = x -m_\nu$. So the quantile functions for $\widetilde{\mu}$ and $ \widetilde{\nu}$ are
$$
\widetilde{F}^{-1}= F^{-1} - m_\mu \ \text{and} \  \widetilde{G}^{-1} = G^{-1} - m_\nu.
$$
And the corresponding standard deviations for $\mu$ and $\nu$ are given by
$$
\sigma_\mu^2 = \int_0^1 |\widetilde{F}^{-1}(t)|^2 dt \ \text{and} \ \sigma_\nu^2 = \int_0^1 |\widetilde{G}^{-1}(t)|^2 dt.
$$ 
The closed form for the $2$-Wasserstein distance using quantile functions shows that
\[
\begin{aligned}
    W_2^2(\widetilde{\mu}, \widetilde{\nu}) = \int_0^1 |\widetilde{F}^{-1}(t) - \widetilde{G}^{-1}(t)|^2 dt &=  \int_0^1 |\widetilde{F}^{-1}(t)|^2 + |\widetilde{G}^{-1}(t)|^2 -2\widetilde{F}^{-1}(t)\widetilde{G}^{-1}(t)dt \\
    & = \sigma_\mu^2 + \sigma_\nu^2 -2 \int_0^1 \widetilde{F}^{-1}(t)\widetilde{G}^{-1}(t)dt \\
    & = \sigma_\mu^2 + \sigma_\nu^2 -2 \langle \widetilde{F}^{-1}, \widetilde{G}^{-1}\rangle_{L^2(0,1)}. 
\end{aligned}
\]
Now define  
$$
\cos (\theta) := \frac{\langle \widetilde{F}^{-1}, \widetilde{G}^{-1}\rangle_{L^2(0,1)}}{\| \widetilde{F}^{-1}\|_{L^2(0,1)} \| \widetilde{G}^{-1}\|_{L^2(0,1)}} = \frac{\langle \widetilde{F}^{-1}, \widetilde{G}^{-1}\rangle_{L^2(0,1)}}{\sigma_\mu \sigma_\nu},
$$
where $\theta$ is the angle between $\widetilde{F}^{-1}$ and $\widetilde{G}^{-1}$ in $L^2(0,1)$. 
Then 
\begin{equation*}
    W_2^2(\widetilde{\mu}, \widetilde{\nu})  = \sigma_\mu^2 + \sigma_\nu^2 -2 \sigma_\mu \sigma_\nu \cos (\theta).
\end{equation*}
Therefore, we have the law of cosines. 
Note that the cosine similarity between quantiles $\widetilde{F}^{-1}$ and $\widetilde{G}^{-1}$ is just the quantile correlation coefficient $\rho_q$:
$$
\rho_q = \cos (\theta) = \frac{\langle \widetilde{F}^{-1}, \widetilde{G}^{-1}\rangle_{L^2(0,1)} }{\sigma_\mu \sigma_\nu}.
$$
Then
\begin{equation*}
    W_2^2(\widetilde{\mu}, \widetilde{\nu})  = \sigma_\mu^2 + \sigma_\nu^2 -2 \sigma_\mu \sigma_\nu \rho_q.
\end{equation*}
\end{proof}
We point out that just as the product-moment correlation requires data to be centered, the quantile correlation is computed on centered distributions.
In particular, the location term $(m_\mu - m_\nu)^2$ in Equation~\eqref{eq:W2Decomposition} plays no role in the diagram.
We present an idea in the end of the discussion in Section~\ref{sec:summary} on integrating this information in a Wasserstein-Taylor diagram if it is pertinent for your application objectives.

Now that we have established the law of cosines in Lemma~\ref{WTaylorDiagram}, we are able to generalize the use of law of cosines in Taylor diagrams to allow evaluation of models from probability distributions which is thus independent of data sample sizes and support cardinalities of empirical distributions. 
Indeed, suppose
\[
\mu_N := \frac1N \sum\limits_{i=1}^N \delta_{x_i} \quad \textrm{and} \quad \nu_M := \frac1M \sum\limits_{j=1}^M \delta_{y_j}
\]
where $N$ and $M$ are not necessarily equal. 
Then the associated  quantile functions $F_N^{-1}$ and $G_M^{-1}$
are still well-defined as step functions on $(0,1)$. 
Therefore, the quantile correlation coefficient $\rho_q$ and the centered Wasserstein distance $W_2^2(\widetilde{\mu}_N,\widetilde{\nu}_M)$
remain well-defined. 
This also illustrates that the Wasserstein-Taylor diagram can compare probability distributions with  different support cardinalities. In contrast, the classical Taylor diagram requires paired observations and therefore requires the same sample size. 
For example, Figure \ref{fig:WTaylorDifferentSize} shows an example of a Wasserstein-Taylor diagram when sample sizes are different. 
The reference data sequence consists of 10,000 samples from the log normal distribution with parameters $\mu =0$ and $\sigma =1$. The remaining data are: 
\begin{itemize}
    \item[$(1)$] 15,000 samples from the gamma distribution with $2$ and $1$ as shape and scale parameters; 
    \item[$(2)$] 20,000 samples from the log normal distribution with parameters $\mu=0$ and $\sigma =1$;  
    \item [$(3)$]  25,000 samples from the Weibull distribution with $2$ and $1$ as shape and scale parameters; 
    \item [$(4)$] 30,000 samples from the normal distribution with $\mu=0$ and $\sigma =3$;
    \item [$(5)$] 35,000 samples from the exponential distribution with $1$ as scale parameter;
    \item [$(6)$] 40,000 samples from the student $t$ distribution with $4$ degrees of freedom;
    \item [$(7)$] 25,000 samples from the bimodal mixture of two normal distributions with 10,000 samples from $N(2.5, 0.35)$ and 15,000 samples from $N(0.5, 0.15)$.
\end{itemize}

To accommodate data sequences with different sample sizes, we evaluate their empirical quantile functions on a refined common probability grid. The centered $2$-Wasserstein distances and quantile correlation coefficients, which satisfy the law-of-cosines relation involving the standard deviations as stated in Lemma~\ref{WTaylorDiagram}, are then numerically approximated using the corresponding quantile values on this grid.

\begin{figure}[ht]
    \centering
    \includegraphics[scale=0.47]{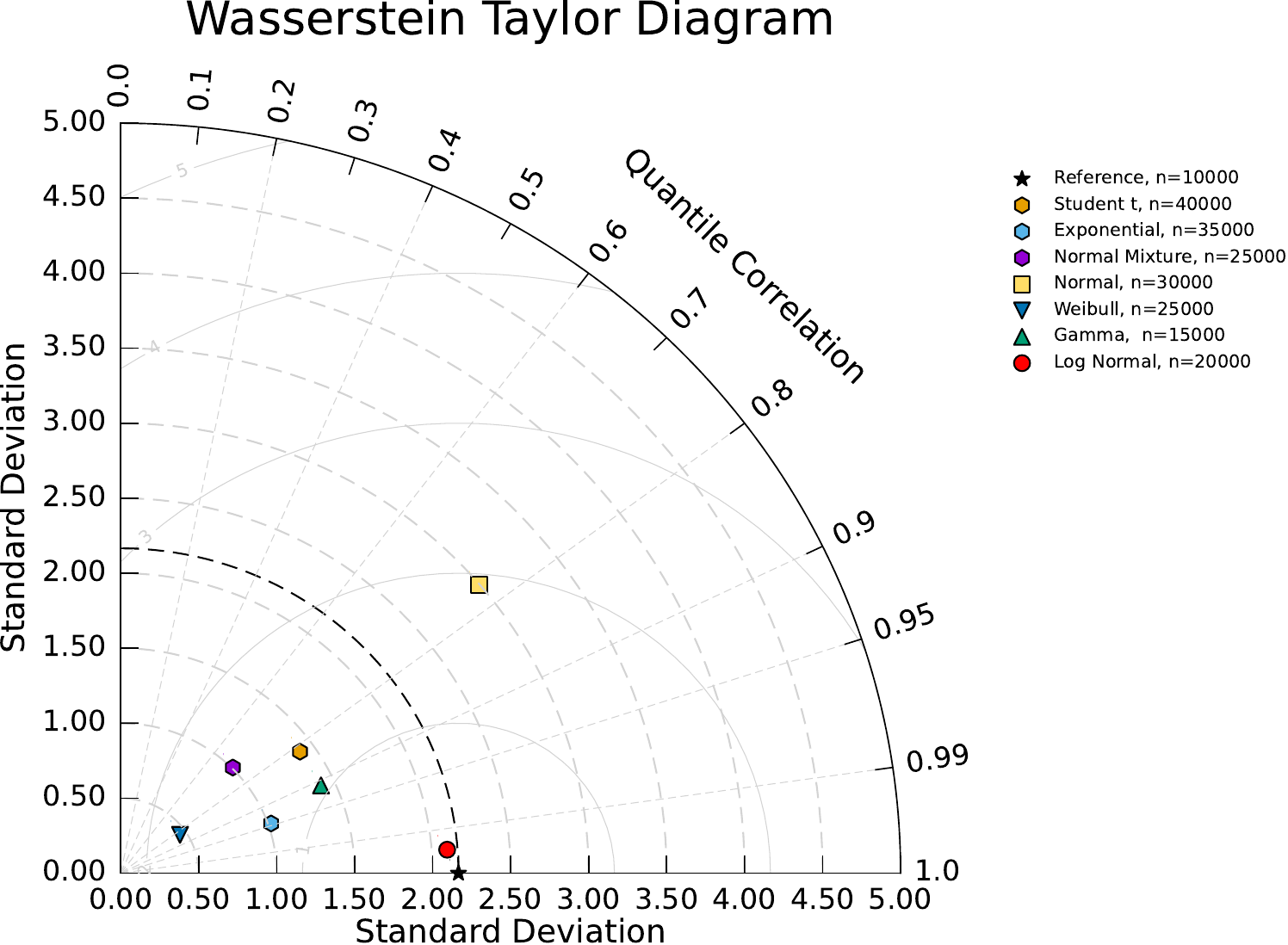} 
    \caption{A Wasserstein-Taylor diagram for samples with different sizes. The reference consists of $10,000$ samples from the log normal distribution with parameters $\mu =0$ and $\sigma =1$. The standard deviation (STD) of the reference is $2.16$. As expected, the data sequence drawn from the same log normal distribution with 20,000 samples performs the best according to the Wasserstein-Taylor-diagram criterion: greatest quantile correlation ($0.997$), smallest centered $2$-Wasserstein distance ($0.173$), and closest STD ($2.099$) to the reference.
    }
    \label{fig:WTaylorDifferentSize}
\end{figure}

There are two distinct correlation coefficients in this paper: the \emph{product-moment correlation coefficient $\rho$} and the \emph{quantile correlation coefficient $\rho_q$}. 
It is well-known that the product-moment correlation coefficient measures the temporal or spatial agreement and is highly sensitive to timing or location errors, while the quantile correlation coefficient measures the similarity of centered distributions and is independent of the temporal or spatial alignment. 
Recalling their definitions in Equations ~\eqref{eqn:defncorrcoeff1} and~\eqref{eqn:defncorrcoeff2}, we have
\begin{align*}
\rho(X,Y) &=  \frac{\langle X', Y' \rangle_{\ell^2(N)}}{ \|X'\|_{\ell^2(N)} \|Y'\|_{\ell^2(N)}} \quad \textrm{for paired data $((x_i,y_i))_{i=1}^N$ with $\sigma_X,\sigma_Y>0$ and}\\
&\hspace{110pt} \textrm{corresponding centered data $X'$ and $Y'$ and}\\
\rho_q(\mu,\nu) &= \frac{\langle \widetilde{F}^{-1}, \widetilde{G}^{-1}\rangle_{L^2(0,1)}}{\| \widetilde{F}^{-1}\|_{L^2(0,1)} \| \widetilde{G}^{-1}\|_{L^2(0,1)}} \quad \textrm{for $\mu, \nu \in \mathcal{P}_2(\mathbb{R})$ with $\sigma_\mu,\sigma_\nu>0$ and}\\
&\hspace{137pt} \textrm{quantile functions $F^{-1}$ and $G^{-1}$.}
\end{align*}

Our goal in the remainder of this section is to relate the quantile correlation applied to finitely supported measures to product-moment correlations of the support of the measures and to quantile correlations of underlying measures when the finitely supported measures are empirical measures.
The result in Equation ~\eqref{eqn:disccorr1} follows from applying the Fr\'echet-Hoeffding upper bound~\cite{hoeffding1940masstabinvariante,frechet1951correlation} to finitely supported distributions and then computing the correlation (see, e.g.,~\cite{lin2014recent}), but we include a direct proof here for completeness.
\begin{proposition}\label{prop:sort}
Let $X=(x_i)_{i=1}^N$ and $Y=(y_i)_{i=1}^N$ be non-constant paired data. Let $X_\circ=(x_{(i)})_{i=1}^N$ and $Y_\circ=(y_{(i)})_{i=1}^N$ be permutations of $X$ and $Y$, respectively, such that
\[
x_{(1)} \le \cdots \leq x_{(N)} \quad  \textrm{and} \quad  y_{(1)} \le \cdots \leq y_{(N)}.
\]
Further define
\[
\mu_X = \frac{1}{N} \sum_{i=1}^N \delta_{x_i} \quad \textrm{and} \quad \nu_Y = \frac{1}{N} \sum_{i=1}^N \delta_{y_i}
\]
Then
\begin{equation} \label{eqn:disccorr1}
\rho(X,Y) \leq \rho(X_{\circ},Y_{\circ}) = \rho_q(\mu_X,\nu_Y),
\end{equation}
where equality holds if and only if $(x_i-x_j)(y_i-y_j) \geq 0$ for all $i,j$, and
\[
W_2(\tilde{\mu}_X,\tilde{\nu}_Y) = E(X'_\circ ,Y'_\circ),
\]
for $\tilde{\mu}_X$, $\tilde{\nu}_Y$ the centered atomic measures and $X'_\circ$, $Y'_\circ$ the centered sorted data.
\end{proposition}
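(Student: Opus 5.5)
The argument has three parts: compute the quantile functions of the atomic measures explicitly, prove the inequality by a rearrangement argument, and then read off the Wasserstein identity from Lemma~\ref{WTaylorDiagram}.

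\emph{Step 1: quantile functions of the atomic measures.}
The quantile function of $\mu_X$ is the step function
\[
F_X^{-1}(t)=x_{(i)} \quad \textrm{for } t\in\left(\tfrac{i-1}{N},\tfrac{i}{N}\right],
\]
and similarly $G_Y^{-1}(t)=y_{(i)}$ on the same interval. This holds even with ties, since the generalized inverse simply repeats the value. The means are $m_{\mu_X}=\overline{x}$ and $m_{\nu_Y}=\overline{y}$. Hence
\[
\widetilde{F}_X^{-1}=x_{(i)}-\overline{x}=:x'_{(i)} \quad \textrm{and} \quad \widetilde{G}_Y^{-1}=y_{(i)}-\overline{y}
\]
on the $i$-th interval. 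Permuting does not change the means, so these are exactly the entries of the centered sorted data $X'_\circ$ and $Y'_\circ$.

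Each interval has length $1/N$, so
\[
\langle \widetilde{F}_X^{-1},\widetilde{G}_Y^{-1}\rangle_{L^2(0,1)}=\tfrac1N\langle X'_\circ,Y'_\circ\rangle_{\ell^2(N)},
\]
and the $L^2(0,1)$ norms equal $\sigma_x$ and $\sigma_y$. Standard deviations are permutation invariant. Substituting into \eqref{eqn:defncorrcoeff2} gives $\rho_q(\mu_X,\nu_Y)=\rho(X_\circ,Y_\circ)$.

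The same identification gives
\[
W_2^2(\tilde\mu_X,\tilde\nu_Y)=\int_0^1|\widetilde{F}_X^{-1}-\widetilde{G}_Y^{-1}|^2\,dt=\tfrac1N\|X'_\circ-Y'_\circ\|^2_{\ell^2(N)}=E^2(X'_\circ,Y'_\circ).
\]
Equivalently, this follows from Lemma~\ref{WTaylorDiagram} combined with \eqref{TaylorDiagram} for the sorted data.

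\emph{Step 2: the inequality.}
The denominators of $\rho(X,Y)$ and $\rho(X_\circ,Y_\circ)$ agree, and
\[
\tfrac1N\sum_i x_i'y_i'=\tfrac1N\sum_i x_iy_i-\overline{x}\,\overline{y},
\]
so it suffices to show $\sum_i x_iy_i\le\sum_i x_{(i)}y_{(i)}$. This is the rearrangement inequality. I would prove it by the standard exchange argument.

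Let $\pi$ be the permutation with $y_i$ paired to $x_i$. Reindex so the $x$'s are sorted, with ties broken so that, among equal $x$'s, the $y$'s are sorted too. If some pair has $x_i<x_j$ but $y_i>y_j$, swap $y_i$ and $y_j$. The sum changes by
\[
(x_j-x_i)(y_i-y_j)>0,
\]
so it strictly increases. Each swap also strictly reduces the number of inversions, so finitely many swaps reach the sorted pairing, and the sum never decreases along the way.

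\emph{Step 3: the equality case (the main point needing care).}
Suppose $(x_i-x_j)(y_i-y_j)\ge 0$ for all $i,j$ (a comonotone pairing). Sort the indices by $x$, breaking ties by $y$. Comonotonicity then forces the $y$'s to be nondecreasing along this order, so the paired sequence is literally $(x_{(i)},y_{(i)})$ and equality holds.

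Conversely, suppose some pair is discordant, i.e.\ $x_i<x_j$ and $y_i>y_j$. Then a swap strictly increases $\sum_i x_iy_i$. The exchange procedure of Step 2 then shows $\sum_i x_iy_i<\sum_i x_{(i)}y_{(i)}$, so equality fails.

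The only subtlety is handling ties. Since the sorted sum is invariant under how ties are ordered, any tie-breaking convention works.
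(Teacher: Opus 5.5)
Your proof is correct and follows the paper's route. The centered quantile functions of $\mu_X$ and $\nu_Y$ are step functions taking the centered sorted values on intervals of length $1/N$, which gives $\rho_q(\mu_X,\nu_Y)=\rho(X_\circ,Y_\circ)$ and $W_2(\tilde\mu_X,\tilde\nu_Y)=E(X'_\circ,Y'_\circ)$, and the inequality with its comonotone equality case is the rearrangement inequality. The only difference is that the paper cites the rearrangement inequality with its equality case, applied directly to the centered sequences, whereas you reduce to the uncentered sums and prove it by an exchange argument; your swaps may end at a comonotone pairing differing from the sorted one only within ties of $x$, which your tie-invariance remark already covers.
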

\begin{proof}
    Recall that given two real sequences $(a_i)_{i=1}^N$, $(b_i)_{i=1}^N$ and the corresponding sorted sequences $(a_{(i)})_{i=1}^N$ and $(b_{(i)})_{i=1}^N$, the rearrangement inequality shows that
$$
\sum_{i=1}^N a_i b_i \leq \sum_{i=1}^N a_{(i)} b_{(i)},
$$
and the equality holds if and only if $(a_i-a_j)(b_i-b_j) \geq 0$ for any $i$ and $j$, that is,  $(a_i)$ and $(b_i)$ are \emph{comonotone}. 
Then by the rearrangement inequality, 
$$
\sum_{i=1}^N (x_i-\overline{x})(y_i-\overline{y}) \leq
\sum_{i=1}^N (x_{(i)}-\overline{x})(y_{(i)}-\overline{y}).
$$
That is, $\rho(X,Y) \leq \rho(X_{\circ},Y_{\circ})$ with equality if and only if $(x_i-x_j)(y_i-y_j) \geq 0$ for any $i$ and $j$. 
Note that
\[
m_{\mu_X} = \overline{x} = \overline{x_{\circ}} \quad \textrm{and} \quad m_{\nu_Y} = \overline{y} = \overline{y_{\circ}}.
\]
Thus, the quantile function $\tilde{F}^{-1}$ of the centered measure $\tilde{\mu}_X$ has the form
\begin{align*}  
\tilde{F}^{-1}(t) &= \inf \left\{x\in\mathbb{R}: \tilde{\mu}_X(-\infty, x]\geq t\right\}\\
&=\left\{\begin{array}{lr}x_{(1)}-\overline{x_{\circ}}; & 0 < t \leq 1/N \\ x_{(2)}-\overline{x_{\circ}}; & 1/N < t \leq 2/N \\ \vdots & \\ x_{(N)}-\overline{x_{\circ}}; & (N-1)/N < t < 1 \end{array} \right.,
\end{align*}
where $x_{(i)}$ and $x_{(i+1)}$ may not be distinct for all $i$.
The quantile function $\tilde{G}^{-1}$ of the centered measure $\tilde{\nu}_Y$ is similarly a piecewise flat function taking the values of the sorted, centered vector over evenly sized intervals.
Define the centered, sorted vectors $X'_\circ$ and $Y'_\circ$ by subtracting $\overline{x_{\circ}}$ from $X_\circ$ and $\overline{y_{\circ}}$ from $Y_\circ$, respectively.
Then,
\begin{align*}
\ip{\tilde{F}^{-1}}{\tilde{G}^{-1}}_{L^2(0,1)} &= \frac{1}{N} \ip{X'_\circ}{Y'_\circ}_{\ell^2(N)},\\
\norm{\tilde{F}^{-1}}_{L^2(0,1)} &= \frac{1}{\sqrt{N}} \norm{X'_\circ}_{\ell^2(N)},\\
\norm{\tilde{G}^{-1}}_{L^2(0,1)} &= \frac{1}{\sqrt{N}} \norm{Y'_\circ}_{\ell^2(N)},\\
W_2^2(\tilde{\mu}_X,\tilde{\nu}_Y) &= \int_0^1 \abs{\tilde{F}^{-1}(t)-\tilde{G}^{-1}(t)}^2 \, dt \\
&= \frac{1}{N} \sum_{i=1}^N (x'_{(i)} - y'_{(i)})^2 = E(X'_\circ,Y'_\circ)^2.
\end{align*}
It now follows that $\rho(X_{\circ},Y_{\circ}) = \rho_q(\mu_X,\nu_Y)$ and $W_2(\tilde{\mu}_X,\tilde{\nu}_Y) =E(X'_\circ,Y'_\circ)$, as desired.
\end{proof}

Proposition~\ref{prop:sort} shows that the Wasserstein-Taylor diagram will recover the original Taylor diagram if samples $(x_i)_{i=1}^N$ and $(y_i)_{i=1}^N$ are comonotone: $(x_i-x_j)(y_i-y_j) \geq 0$ for any $i$ and $j$. In this case, the centered RMSE coincides with the centered $2$-Wasserstein distance, while the product-moment correlation coefficient coincides with the quantile correlation coefficient.
In climate science applications, the cumulative climate variables indexed by time are nondecreasing and provide natural examples of comonotone structures, such as cumulative precipitation, accumulated runoff, and cumulative solar radiation. 
Therefore, for the cumulative climate variables indexed by time, the Wasserstein-Taylor diagram just becomes the original Taylor diagram. We summarize the above in the following corollary. 

\begin{corollary}\label{WTaylorAsTayloe}
    The Wasserstein-Taylor diagram recovers the Taylor diagram if the samples $(x_i)_{i=1}^N$ and $(y_i)_{i=1}^N$ are comonotone: for any $i$ and $j$, $(x_i-x_j)(y_i-y_j) \geq 0.$ 
\end{corollary}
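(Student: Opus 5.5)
The corollary should follow directly from Proposition~\ref{prop:sort} together with Lemma~\ref{WTaylorDiagram}. A Taylor diagram is fixed by the triple $(\sigma_x,\sigma_y,\rho(X,Y))$ for each model, with the distance to the reference point equal to $E(X',Y')$. A Wasserstein-Taylor diagram built from the atomic measures $\mu_X$ and $\nu_Y$ is fixed by the triple $(\sigma_{\mu_X},\sigma_{\nu_Y},\rho_q(\mu_X,\nu_Y))$, with the distance to the reference point equal to $W_2(\tilde{\mu}_X,\tilde{\nu}_Y)$. So it suffices to show that these three coordinates, and the distance, agree under the comonotonicity hypothesis.

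First, the standard deviations. By the preliminaries, $\sigma_{\mu_X}^2=\norm{\tilde F^{-1}}_{L^2(0,1)}^2$. The proof of Proposition~\ref{prop:sort} shows this equals $\frac1N\norm{X'_\circ}_{\ell^2(N)}^2$. Since sorting is a permutation and leaves the mean unchanged, this is $\frac1N\norm{X'}_{\ell^2(N)}^2=\sigma_x^2$. The same argument gives $\sigma_{\nu_Y}=\sigma_y$. The radial coordinates therefore coincide.

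Second, the angular coordinate. Under the hypothesis $(x_i-x_j)(y_i-y_j)\ge 0$ for all $i,j$, the equality case of \eqref{eqn:disccorr1} gives $\rho(X,Y)=\rho(X_\circ,Y_\circ)=\rho_q(\mu_X,\nu_Y)$, so the azimuthal angles $\arccos(\rho)$ and $\arccos(\rho_q)$ agree.

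Third, the distance to the reference point. Proposition~\ref{prop:sort} gives $W_2(\tilde\mu_X,\tilde\nu_Y)=E(X'_\circ,Y'_\circ)$, so it remains to check $E(X'_\circ,Y'_\circ)=E(X',Y')$. One route is through the law of cosines: by \eqref{TaylorDiagram} and Lemma~\ref{WTaylorDiagram}, both $E(X',Y')^2$ and $W_2^2(\tilde\mu_X,\tilde\nu_Y)$ equal $\sigma_x^2+\sigma_y^2-2\sigma_x\sigma_y\rho$, using the equalities from the previous two steps. A more direct route is that comonotonicity lets us choose a single permutation that sorts $X$ and $Y$ simultaneously; ties in one sequence are resolved by ordering the other. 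Then $\sum_i(x_i'-y_i')^2=\sum_i(x'_{(i)}-y'_{(i)})^2$.

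The only delicate point is this last one when there are ties. Sorting $X$ and $Y$ independently may use different permutations, so $E(X'_\circ,Y'_\circ)$ is not literally the same pairing as $E(X',Y')$. The law-of-cosines argument sidesteps this, since it depends only on $\sigma_x$, $\sigma_y$ and $\rho$, which we have already matched. I would therefore present that version and mention the common-permutation argument as a remark.
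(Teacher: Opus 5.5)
Your proposal is correct and follows the paper's route: the paper reads the corollary directly off the equality case of Proposition~\ref{prop:sort}, asserting that under comonotonicity $\rho(X,Y)=\rho_q(\mu_X,\nu_Y)$ and that the centered RMSE coincides with the centered $2$-Wasserstein distance. Your explicit checks supply details the paper leaves implicit, namely that $\sigma_{\mu_X}=\sigma_x$ and $\sigma_{\nu_Y}=\sigma_y$ by permutation invariance, and that $E(X',Y')=E(X'_\circ,Y'_\circ)$ via the law of cosines or a common sorting permutation.
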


Now we show that the quantile correlation of empirical measures converges to the quantile correlation of the measures that were sampled from.
\begin{theorem}\label{thm:samplestat}
   Let $X=(x_i)_{i=1}^N$ and $Y=(y_i)_{i=1}^N$  be i.i.d.\ samples from $\mu, \nu \in \mathcal{P}_2(\mathbb{R})$ with $\sigma_\mu,\sigma_\nu>0$.
   Further set
\[
\mu_N = \frac{1}{N} \sum_{i=1}^N \delta_{x_i} \quad \textrm{and} \quad \nu_N = \frac{1}{N} \sum_{i=1}^N \delta_{y_i}
\]
Assume that $N$ is large enough that $\sigma_{\mu_N}\sigma_{\nu_N} >0$ almost surely.
   Then for
   \[
   \hat{\rho}_{q,N} := \rho(X_{\circ}, Y_{\circ}) = \rho_q (\mu_N, \nu_N), 
   \]
   \[
   \abs{\hat{\rho}_{q,N}-\rho_q(\mu,\nu)} \leq \frac{2 W_2(\mu_N,\mu)}{\sigma_\mu} + \frac{2 W_2(\nu_N,\nu)}{\sigma_\nu}
   \]
   and thus
   \[
   \hat{\rho}_{q,N} \rar \rho_q(\mu,\nu) \enskip \textrm{a.s.\ as $N \rar \infty$}.
   \]
\end{theorem}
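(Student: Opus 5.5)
The plan is to work entirely in the Hilbert space $L^2(0,1)$ and treat $\rho_q$ as an inner product of normalized vectors. Write $u=\widetilde{F}^{-1}$, $v=\widetilde{G}^{-1}$ for the centered quantile functions of $\mu,\nu$, and $u_N=\widetilde{F}_N^{-1}$, $v_N=\widetilde{G}_N^{-1}$ for those of $\mu_N,\nu_N$. By the definition in Equation~\eqref{eqn:defncorrcoeff2} we have $\rho_q(\mu,\nu)=\langle \hat u,\hat v\rangle$ and $\hat\rho_{q,N}=\langle \hat u_N,\hat v_N\rangle$, where $\hat w := w/\norm{w}$. Also $\norm{u}=\sigma_\mu$, $\norm{v}=\sigma_\nu$, and the norms $\norm{u_N},\norm{v_N}$ are positive by assumption. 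The identity $\hat\rho_{q,N}=\rho(X_\circ,Y_\circ)$ is just Proposition~\ref{prop:sort}, so only the bound and the convergence remain.

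\textbf{Step 1 (centering is nonexpansive).} Centering $w\mapsto w-\int_0^1 w$ is the orthogonal projection onto the complement of the constants in $L^2(0,1)$. Since the quantile function of a centered measure is the centered quantile function, this gives
\[
\norm{u_N-u}_{L^2(0,1)}\le \norm{F_N^{-1}-F^{-1}}_{L^2(0,1)} = W_2(\mu_N,\mu),
\]
and likewise $\norm{v_N-v}\le W_2(\nu_N,\nu)$.

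\textbf{Step 2 (normalization is Lipschitz away from zero).} For nonzero $a,b$ we have
\[
\norm{\hat a-\hat b}\le \frac{\norm{a-b}}{\norm a}+\norm b\,\Big|\frac1{\norm a}-\frac1{\norm b}\Big| = \frac{\norm{a-b}}{\norm a}+\frac{\big|\norm b-\norm a\big|}{\norm a}\le \frac{2\norm{a-b}}{\norm a}.
\]
Applying this with $a=u$ (so the denominator is the population quantity $\sigma_\mu$, exactly as in the statement) and $b=u_N$ gives $\norm{\hat u_N-\hat u}\le 2W_2(\mu_N,\mu)/\sigma_\mu$. The same argument gives $\norm{\hat v_N-\hat v}\le 2W_2(\nu_N,\nu)/\sigma_\nu$.

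\textbf{Step 3 (combine).} Split the difference as
\[
\langle\hat u_N,\hat v_N\rangle-\langle\hat u,\hat v\rangle=\langle \hat u_N-\hat u,\hat v_N\rangle+\langle \hat u,\hat v_N-\hat v\rangle .
\]
By Cauchy--Schwarz and the fact that all hatted vectors are unit vectors, the absolute value is at most $\norm{\hat u_N-\hat u}+\norm{\hat v_N-\hat v}$. Step 2 then yields the stated inequality.

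\textbf{Step 4 (almost sure convergence).} It remains to show $W_2(\mu_N,\mu)\to0$ a.s., and similarly for $\nu$. The plan is to use the standard characterization that $W_2$-convergence is equivalent to weak convergence together with convergence of second moments. Varadarajan's theorem gives $\mu_N\rightharpoonup\mu$ a.s. The strong law of large numbers gives $\frac1N\sum x_i^2\to\int x^2\,d\mu$ a.s., which is finite since $\mu\in\mathcal P_2(\mathbb R)$. Intersecting these two full-measure events, and doing the same for $\nu$, gives the a.s. convergence of the right-hand side to $0$, and hence $\hat\rho_{q,N}\to\rho_q(\mu,\nu)$ a.s.

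The only non-elementary ingredient is Step 4. I expect it to be the main point needing care: specifically, citing the equivalence between $W_2$-convergence and weak convergence plus convergence of second moments (e.g.\ from \cite{figalli2021invitation}). Steps 1--3 are routine Hilbert-space estimates.
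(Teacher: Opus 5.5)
Your proposal is correct and follows essentially the same route as the paper. Both arguments use centering as an orthogonal projection to get $\norm{\tilde{F}_N^{-1}-\tilde{F}^{-1}}_{L^2(0,1)}\le W_2(\mu_N,\mu)$, the same normalization inequality with the population norm in the denominator, and the same splitting of the inner-product difference followed by Cauchy--Schwarz on unit vectors. The only difference is Step 4: you justify $W_2(\mu_N,\mu)\to 0$ a.s.\ via Varadarajan's theorem, the strong law for second moments, and the characterization of $W_2$-convergence as weak convergence plus convergence of second moments, whereas the paper simply cites this fact from \cite{bobkov2019one}; your justification is valid.
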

\begin{proof}
We note that for any nonzero vectors $a, b$ in a Hilbert space
\[
\norm{\frac{a}{\norm{a}}-\frac{b}{\norm{b}}} \leq \frac{2\norm{a-b}}{\norm{b}}
\]
holds.
Indeed, by applying the triangle inequality twice we have
\begin{align*}  
\norm{\frac{\norm{b}}{\norm{a}}a-b} &\leq \norm{\frac{\norm{b}}{\norm{a}}a-a} + \norm{a-b}= \abs{\frac{\norm{b}}{\norm{a}}-1} \norm{a} + \norm{a-b}\\
&= \abs{\norm{b}-\norm{a}} +  \norm{a-b}\leq 2 \norm{a-b}.
\end{align*}
Let $F^{-1}$ and $\tilde{F}^{-1}$ denote the quantile and centered quantile function, respectively, for $\mu$; $F_N^{-1}$ and $\tilde{F}_N^{-1}$ denote the quantile and centered quantile function, respectively, for $\mu_N$; $G^{-1}$ and $\tilde{G}^{-1}$ denote the quantile and centered quantile function, respectively, for $\nu$; and  $G_N^{-1}$ and $\tilde{G}_N^{-1}$ denote the quantile and centered quantile function, respectively, for $\nu_N$.
Note that each centered quantile function is the orthogonal projection of the corresponding non-centered quantile function onto the orthogonal complement of constant functions in $L^2(0,1)$.
Thus,
\[
\norm{\tilde{F}_N^{-1} - \tilde{F}^{-1}}_{L^2(0,1)} \leq  \norm{F_N^{-1} - F^{-1}}_{L^2(0,1)} = W_2(\mu_N, \mu),
\]
implying that
\[
\norm{\frac{\tilde{F}_N^{-1}}{\sigma_{\mu_N}} - \frac{\tilde{F}^{-1}}{\sigma_{\mu}}}_{L^2(0,1)} \leq \frac{2W_2(\mu_N,\mu)}{\sigma_\mu}
\]
and similarly for $\nu$ and $\nu_N$.
Now, making use of the triangle inequality and the Cauchy-Schwarz inequality, we have
\begin{align*}
    \abs{\hat{\rho}_{q,N} - \rho_q(\mu,\nu)} &= \abs{\ip{\frac{\tilde{F}_N^{-1}}{\sigma_{\mu_N}}}{\frac{\tilde{G}_N^{-1}}{\sigma_{\nu_N}}} - \ip{\frac{\tilde{F}^{-1}}{\sigma_{\mu}}}{\frac{\tilde{G}^{-1}}{\sigma_{\nu}}}}\\
    &= \abs{\ip{\frac{\tilde{F}_N^{-1}}{\sigma_{\mu_N}}}{\frac{\tilde{G}_N^{-1}}{\sigma_{\nu_N}}} - \ip{\frac{\tilde{F}_N^{-1}}{\sigma_{\mu_N}}}{\frac{\tilde{G}^{-1}}{\sigma_{\nu}}} + \ip{\frac{\tilde{F}_N^{-1}}{\sigma_{\mu_N}}}{\frac{\tilde{G}^{-1}}{\sigma_{\nu}}}- \ip{\frac{\tilde{F}^{-1}}{\sigma_{\mu}}}{\frac{\tilde{G}^{-1}}{\sigma_{\nu}}}}\\ 
    &\leq \abs{\ip{\frac{\tilde{F}_N^{-1}}{\sigma_{\mu_N}}}{\frac{\tilde{G}_N^{-1}}{\sigma_{\nu_N}}-\frac{\tilde{G}^{-1}}{\sigma_{\nu}}}} + \abs{\ip{\frac{\tilde{F}_N^{-1}}{\sigma_{\mu_N}}-\frac{\tilde{F}^{-1}}{\sigma_{\mu}}}{\frac{\tilde{G}^{-1}}{\sigma_{\nu}}}}\\
    &\leq \norm{\frac{\tilde{F}_N^{-1}}{\sigma_{\mu_N}}}_{L^2(0,1)}\norm{\frac{\tilde{G}_N^{-1}}{\sigma_{\nu_N}}-\frac{\tilde{G}^{-1}}{\sigma_{\nu}}}_{L^2(0,1)} + \norm{\frac{\tilde{F}_N^{-1}}{\sigma_{\mu_N}}-\frac{\tilde{F}^{-1}}{\sigma_{\mu}}}_{L^2(0,1)}\norm{\frac{\tilde{G}^{-1}}{\sigma_{\nu}}}_{L^2(0,1)}\\
    &=\norm{\frac{\tilde{G}_N^{-1}}{\sigma_{\nu_N}}-\frac{\tilde{G}^{-1}}{\sigma_{\nu}}}_{L^2(0,1)} + \norm{\frac{\tilde{F}_N^{-1}}{\sigma_{\mu_N}}-\frac{\tilde{F}^{-1}}{\sigma_{\mu}}}_{L^2(0,1)}\\
    &\leq \frac{2W_2(\mu_N,\mu)}{\sigma_\mu} + \frac{2W_2(\nu_N,\nu)}{\sigma_\nu},
\end{align*}
as desired.
It is well-known (cf.~\cite{bobkov2019one} and the sources therein) that $W_2(\mu_N,\mu) \rar 0$ a.s.\ for $N \rar \infty$ for all $\mu \in \mathcal{P}_2(\mathbb{R})$, finishing the proof.
\end{proof}
See~\cite{bobkov2019one} for various results concerning the rates of convergence of empirical measures in Wasserstein metric to tighten Theorem~\ref{thm:samplestat} for measures with additional properties.

\section{Applications of Wasserstein-Taylor Diagram}\label{sec:application}

\subsection{Climate Model Evaluation for Convective Cloud Fraction.}\label{sec:climate}

\begin{table*}
\centering
\caption{Performance measures for the Taylor diagram.}
\label{table:TaylorParametersClimate}

\begin{tabular}{lccc}
\toprule
Model & Centered RMSE & Product-Moment Correlation & Prediction STD \\
\midrule
MLP  & 5.086 & 0.7465 & 5.717 \\
EBMregion   & 6.505 & 0.5254  & 3.942 \\
EBM   & 6.745 & 0.4711 & 3.763 \\
LR   & 7.004 & 0.4662 & 1.712 \\
\hline
\end{tabular}
\end{table*}

\begin{table*}
\centering
\caption{Performance measures for the Wasserstein-Taylor diagram.}
\label{table:wTaylorParametersClimate}

\begin{tabular}{lccc}
\toprule
Model & Centered $2$-Wasserstein & Quantile Correlation & Prediction STD \\
\midrule
MLP  & 2.794 & 0.9531 & 5.717 \\
EBMregion   & 4.646 & 0.8694 & 3.942 \\
EBM   & 5.801 & 0.6770 & 3.763 \\
LR   & 6.937 & 0.5017 & 1.712 \\
\hline
\end{tabular}
\end{table*}

\begin{figure}[ht]
    \centering
    \includegraphics[scale=0.47]{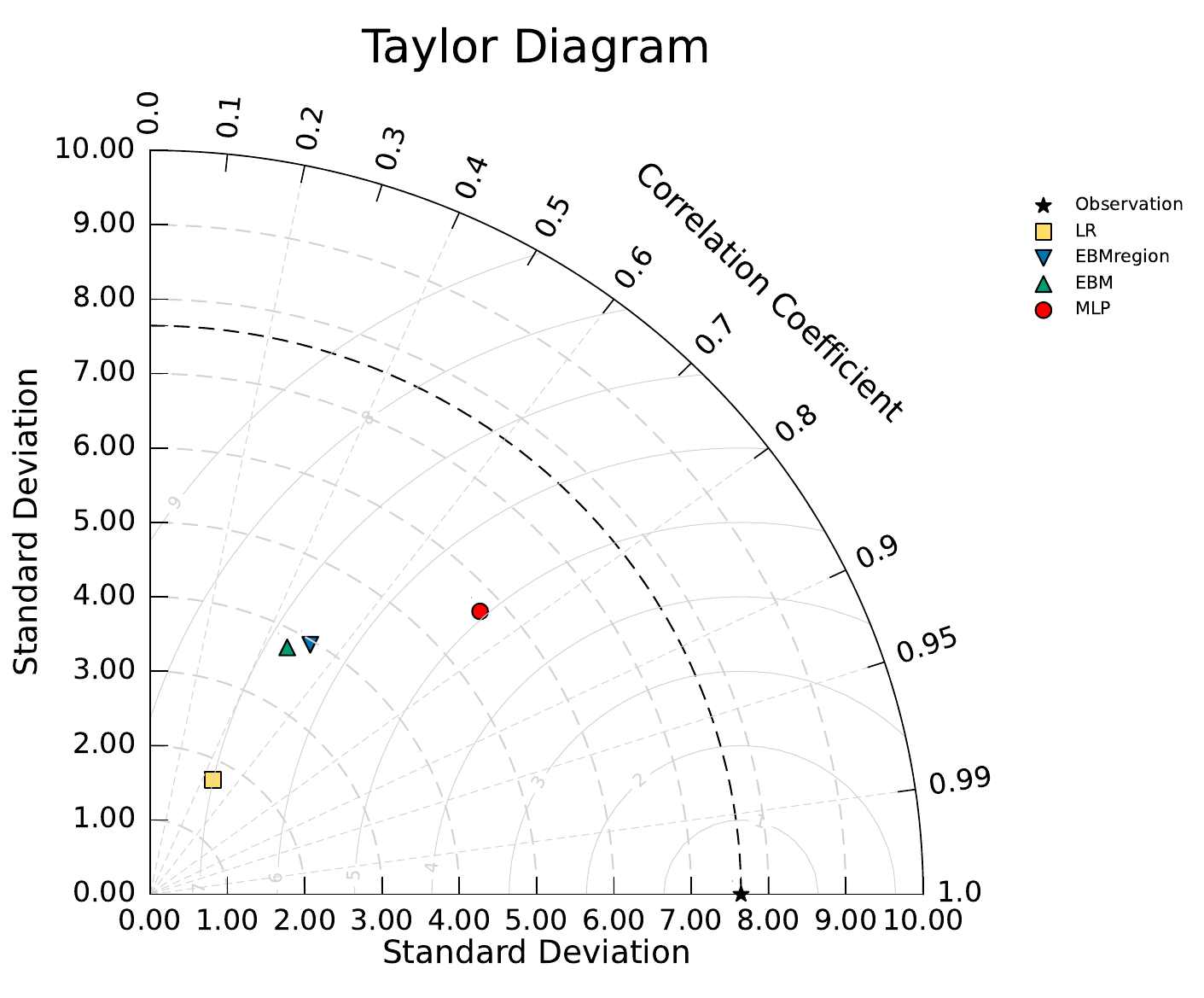} 
    \caption{A Taylor diagram for model evaluation in convective cloud fraction. The standard deviation (STD) of the observed data is 7.645. Note that the MLP model has the closest distance to the reference point.  In fact, it has the smallest centered RMSE, greatest product-moment correlation, and closest prediction STD with the reference data. Therefore, the MLP model has the best performance in temporal alignment compared to other models. } 
\label{fig:TaylorClimate}
\end{figure}

To illustrate a practical application of the Wasserstein-Taylor diagram, we use the hourly convective cloud fraction dataset developed by Yu et al.\ (2026) \cite{yu-2026}. 
The objective is to evaluate the performance of several machine-learning models relative to a ground-truth convective cloud fraction dataset derived from a high-resolution regional climate physical model simulation over the contiguous United States (CONUS), i.e., the CONUS404 dataset \cite{rasmussenetal-2023}. 
The site we choose in this section is located at 25° N and 107° W, and the time range is from 00:00:00 of January 1, 2022 to 23:00:00 of September 30, 2022. 
The models considered include a multilayer perceptron (MLP), an Explainable Boosting Machine (EBM) \cite{lou2013accurate} trained on the full-CONUS-domain dataset, an EBM trained on regional datasets (EBMregion), and a multivariable linear regression model (LR).
EBM is an interpretable machine learning method that allows some insight into why an input produces a particular output.

The performance measures for the associated Taylor diagram and Wasserstein-Taylor diagram are listed in Table \ref{table:TaylorParametersClimate} and  Table \ref{table:wTaylorParametersClimate}, respectively.  
Furthermore, the Taylor diagram and Wasserstein-Taylor diagram can be found in Figure \ref{fig:TaylorClimate} and Figure \ref{fig:wTaylorClimate}. 
The STD of the reference CONUS404 dataset is 7.645. Note that the MLP model has the closest distance to the reference point in the Taylor diagram, and Table \ref{table:TaylorParametersClimate} also shows that the MLP model has the smallest centered RMSE, greatest product-moment correlation, and closest prediction STD with the reference data. 
In this case, the MLP model has the best performance in temporal alignment compared to other models. 
Similarly, the MLP model has the closest distance to the reference point in the Wasserstein-Taylor diagram in Figure \ref{fig:wTaylorClimate}, which means that it has the smallest centered $2$-Wasserstein distance, greatest quantile correlation coefficient, and closest prediction STD with the reference. 
The performance measures in Table \ref{table:wTaylorParametersClimate} also confirm this observation. Therefore, the MLP model has the best performance in the centered probability distribution alignment with reference data compared to other models. 
Combining the results from the Taylor diagram and Wasserstein-Taylor diagram, we conclude that compared to other models, the MLP model has the best performance in the temporal and centered probability distribution alignments with the observations. 

Beyond the MLP results, we note that the two EBM configurations are more clearly separated by the quantile correlation in the Wasserstein-Taylor diagram than by the correlation coefficient in the original Taylor diagram, with EBMregion attaining the higher quantile correlation with the ground truth. 
Since EBMregion is trained exclusively over the target geographic domain, this result suggests that the Wasserstein-Taylor diagram is more sensitive to differences among EBM configurations.
Notably, if one wanted to use an interpretable method (i.e., not an MLP), it is important to clearly distinguish among competing interpretable models. In this use case, the Wasserstein-Taylor diagram can provide such discrimination.

\begin{figure}[ht]
    \centering
    \includegraphics[scale=0.47]{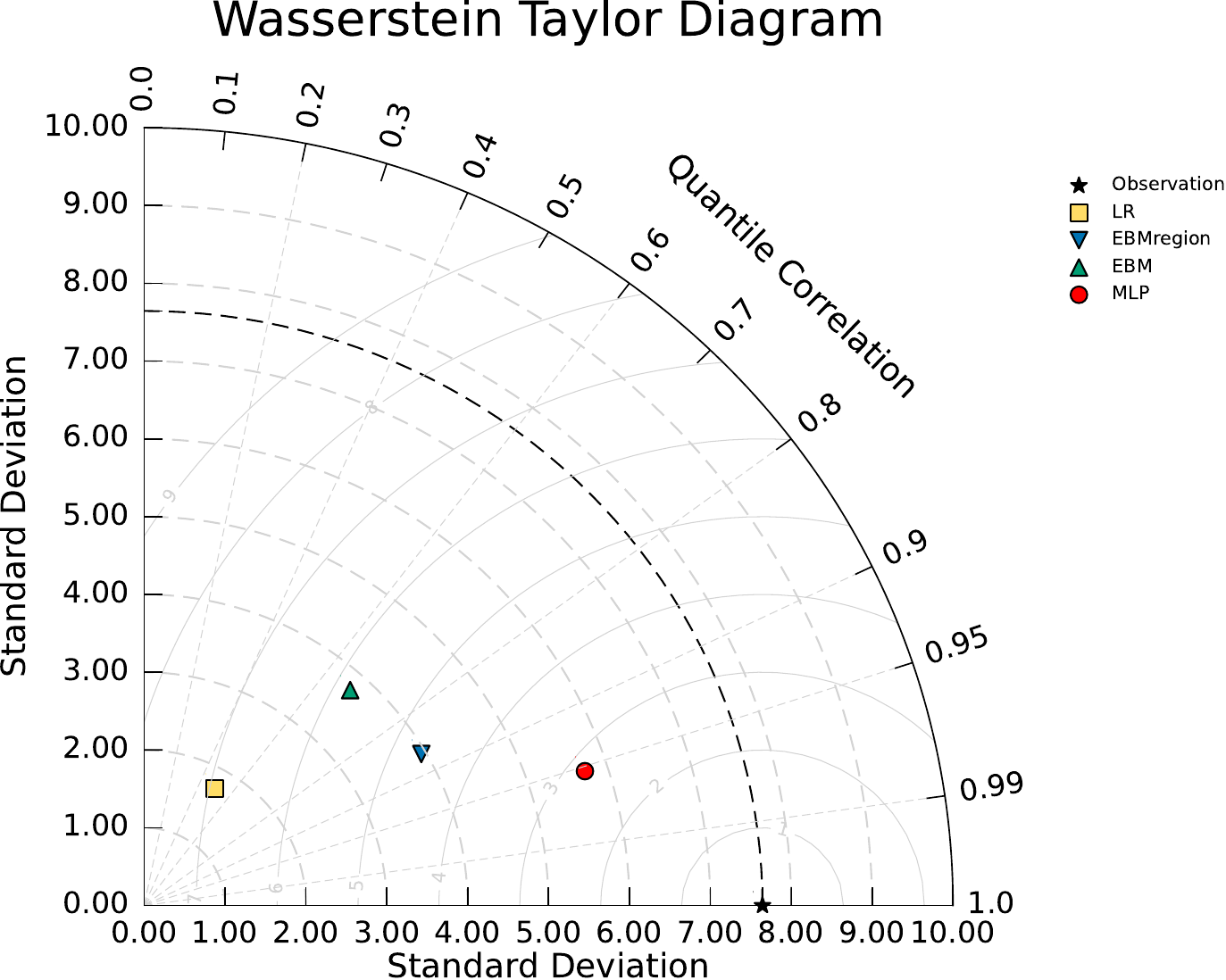} 
    \caption{A Wasserstein-Taylor diagram for model evaluation in convective cloud fraction. The standard deviation (STD) of observed data is 7.645. The MLP model has the closest distance to the reference point, which means that it has the smallest centered $2$-Wasserstein distance, greatest quantile correlation coefficient, and closest prediction STD with the reference. Therefore, the MLP model is the best according to the Wasserstein-Taylor diagram criterion compared to other models.} 
    \label{fig:wTaylorClimate}
\end{figure}

\subsection{Comparison of Probability Distribution Fitting for Wind Speed}\label{sec:wind} A good estimation of the wind speed distribution is of high interest and can contribute to many applications, including assessment of wind energy potential, risk assessment of insurance,  pollutant dispersion control in urban areas, and building and bridge design \cite{yang2025beta}. 
A central question in wind speed distribution estimation is how to compare candidate distributions and choose the best model. 
In this setting, we only have observed wind speed data and potential probability distributions; there are no predictions. 
As explained above, the Wasserstein-Taylor diagram  provides a visualization framework to compare probability distributions in such scenarios. 

\begin{table*}
\centering
\caption{Performance measures and prediction standard deviations (STDs) for considered probability distributions.}
\label{table:wTaylorParameters}

\begin{tabular}{lccc}
\toprule
Model & Centered $2$-Wasserstein & Quantile Correlation & Prediction STD \\
\midrule
BGL  & 0.103 & 0.9996 & 3.488 \\
BE   & 0.217 & 0.9985 & 3.360 \\
BW   & 0.219 & 0.9980 & 3.442 \\
BL   & 0.266 & 0.9980 & 3.314 \\
GL   & 0.610 & 0.9907 & 3.055 \\
L    & 0.447 & 0.9946 & 3.721 \\
LogN & 0.753 & 0.9879 & 3.957 \\
GAM  & 0.518 & 0.9922 & 3.159 \\
W    & 0.624 & 0.9851 & 3.243 \\
GEV  & 1.086 & 0.9662 & 3.972 \\
\bottomrule
\end{tabular}
\end{table*}

\begin{figure}[ht]
    \centering
    \includegraphics[scale=0.47]{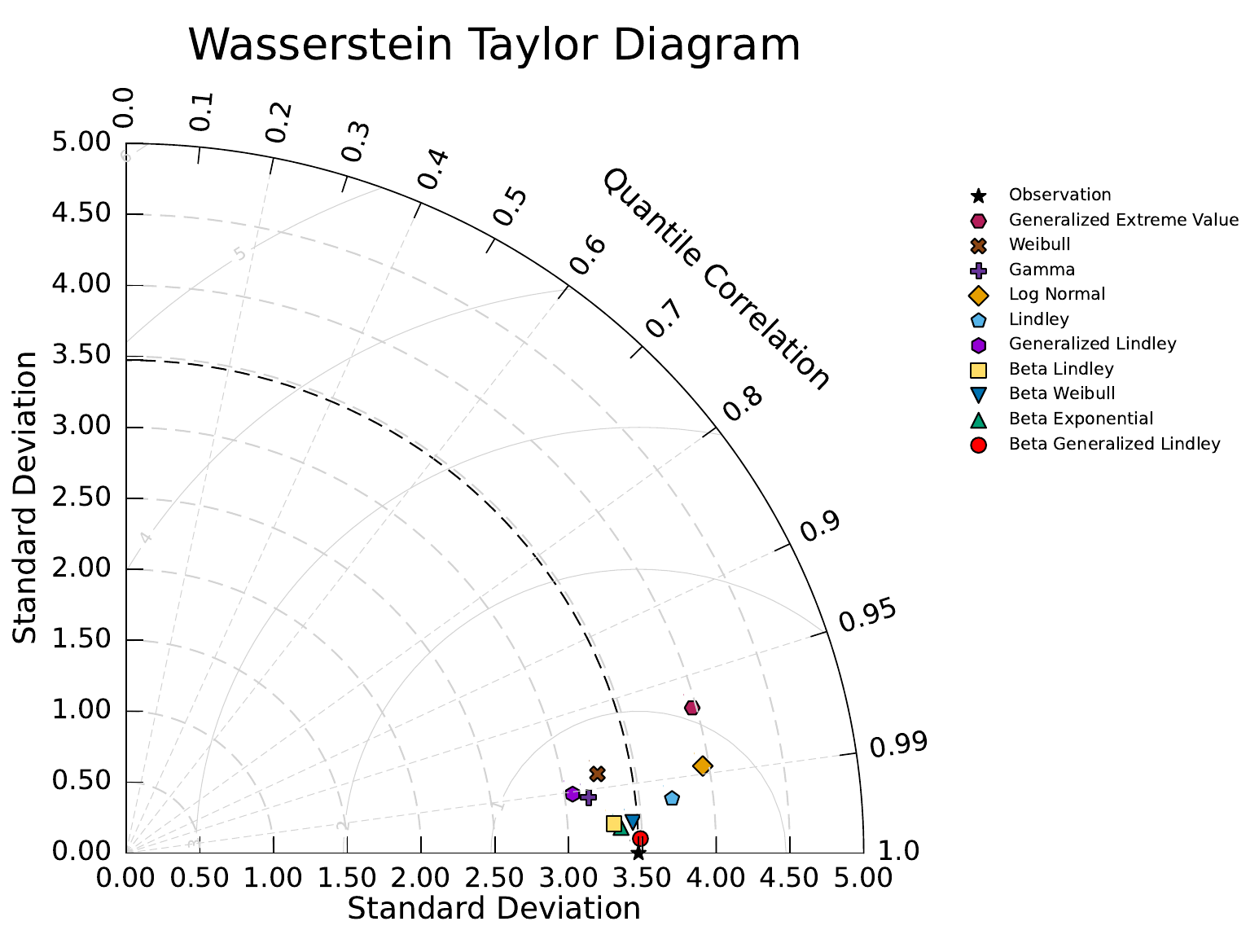} 
    \caption{A Wasserstein-Taylor diagram for distribution comparison of wind speed. The standard deviation of wind speed observation is 3.474, and the beta generalized Lindley distribution has the best performance according to the centered $W_2$ criterion, since it has the smallest distance to the reference point in the Wasserstein-Taylor diagram.} 
    \label{fig:WTaylor_Wind_Speed}
\end{figure}

Following the study in \cite{yang2025beta}, we use distributions to model the wind speed data derived from the measurements at a height of 80 meters of the Flatirons M2 meteorological tower \cite{jager1996nrel}, located about 8 kilometers south of Boulder, Colorado, USA. We use hourly averaged wind speed data from 00:00:00 of January 1, 2010 to 23:00:00 of December 31, 2020. 
Distributions used in this paper include beta generalized Lindley (BGL), beta Lindley (BL), generalized Lindley (GL), Lindley (L), beta Weibull (BW), Weibull (W), beta exponential (BE), log normal (LogN), gamma (GAM), and generalized extreme value (GEV). 
More details about the density functions and parameter estimates can be found in Tables 3-4 in \cite{yang2025beta}.

By calculating the standard deviations and quantiles of each distribution, one can find the associated centered $2$-Wasserstein distances and quantile correlation coefficients with observed wind speed data, as shown in Table \ref{table:wTaylorParameters}. 
The STD of the wind speed observation is 3.474. 
Next, one can generate the corresponding Wasserstein-Taylor diagram as shown in Figure \ref{fig:WTaylor_Wind_Speed}, which shows that the beta generalized Lindley distribution has the smallest distance to the reference point. 
Table \ref{table:wTaylorParameters} also shows that the beta generalized Lindley distribution has the smallest centered $2$-Wasserstein distance, greatest quantile correlation coefficient, and closest STD to the wind speed distribution. 
Therefore, the beta generalized Lindley distribution has the best according to the centered $W_2$ criterion, which also matches the best result stated in \cite{yang2025beta} which used a variety of goodness-of-fit measures.

\subsection{Normalized Wasserstein-Taylor Diagram}\label{sec:normalized}
In some situations, we would like to compare the model performance for different variables or datasets in a single diagram. 
For example, one may want to simultaneously compare the climate model performance for two variables like temperature and precipitation. 
This leads to the so-called \emph{normalized Taylor diagram}. 
Mathematically, dividing Equation \eqref{TaylorDiagram} by $\sigma_x^2>0$ on both sides, we have the following law of cosines
\begin{equation}
\left(\frac{E}{\sigma_x}\right)^2 = 1 + \left(\frac{\sigma_y}{\sigma_x}\right)^2 - 2 \frac{\sigma_y}{\sigma_x} \rho 
 = 1 + \left(\frac{\sigma_y}{\sigma_x}\right)^2 - 2\frac{\sigma_y}{\sigma_x}\cos(\theta). 
\end{equation}
In this case, the standard deviation of the observed data is normalized to one, and the coordinate of reference point in the normalized Taylor diagram is just $(1,0)$. 
Therefore, one can compare the relative model performance across variables or datasets in a single normalized Taylor diagram.

In this section, we introduce the \emph{normalized Wasserstein-Taylor diagram} to compare distributions or samples of different sizes for several variables in a single diagram. For example, one may want to simultaneously compare wind speed distributions at several heights. Mathematically, dividing Equation \eqref{eqn:WTaylorDiagram} by $\sigma_\mu^2>0$ on both sides, we have the following laws of cosines
\begin{equation}\label{eqn:normalizedWTaylor1}
    \left(\frac{W_2(\widetilde{\mu}, \widetilde{\nu})}{\sigma_\mu}\right)^2  = 1+ \left(\frac{\sigma_\nu}{\sigma_\mu}\right)^2 -2 \frac{\sigma_\nu}{\sigma_\mu} \rho_q =1 + \left(\frac{\sigma_\nu}{\sigma_\mu}\right)^2 -2  \frac{\sigma_\nu}{\sigma_\mu} \cos(\theta),
\end{equation}
Similarly, the reference point in the normalized Wasserstein-Taylor diagram is still $(1,0)$. Therefore, one can compare the relative performance of distributions for several variables in a single normalized Wasserstein-Taylor diagram. 

\begin{figure}[ht]
    \centering
    \includegraphics[scale=0.47]{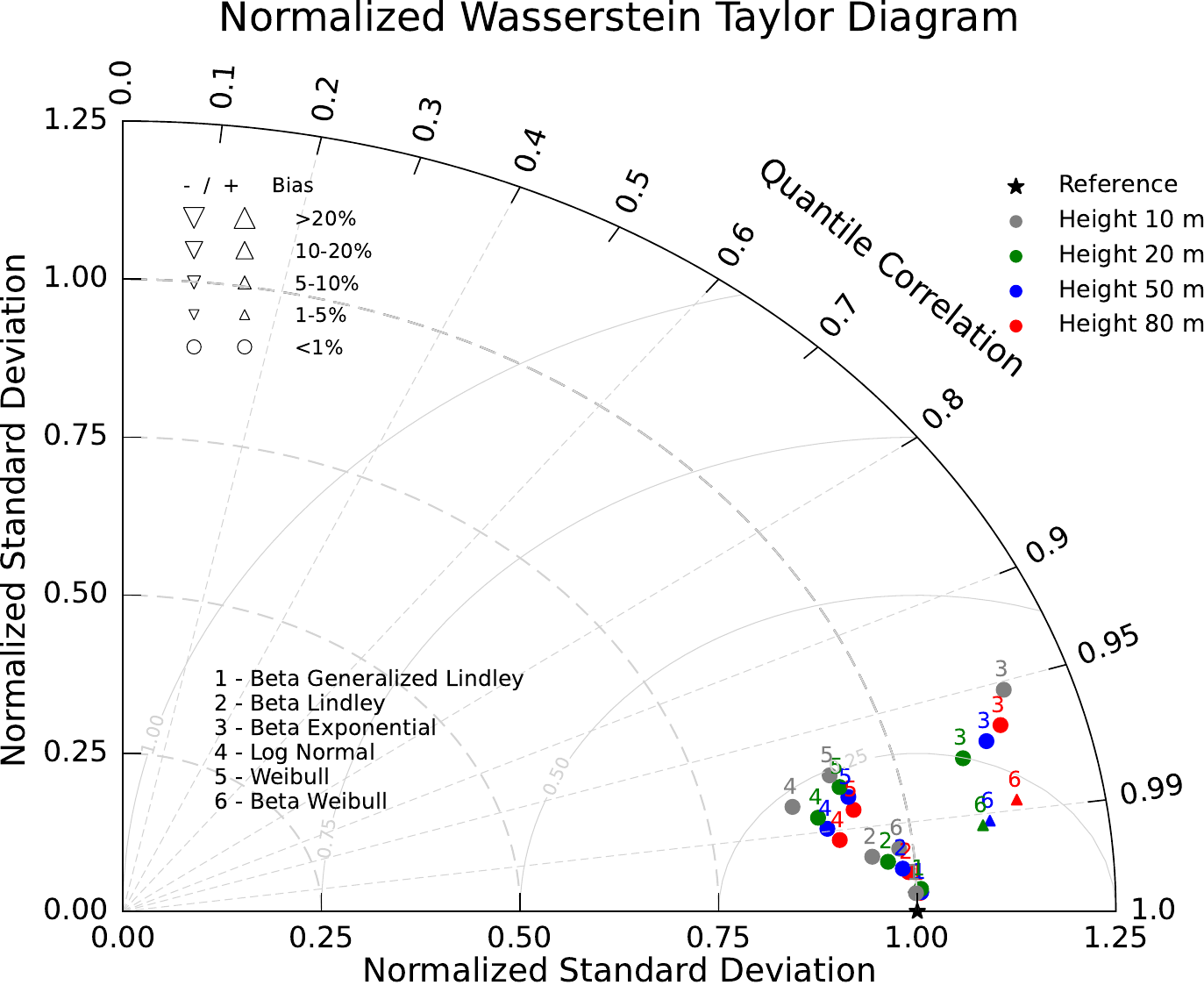} 
    \caption{A normalized Wasserstein-Taylor diagram for wind speed distribution fitting at multiple heights. All the standard deviations of observed data are normalized to one, and the reference point is $(1,0)$.  In the diagram, ``Bias'' means the relative mean bias between the model distribution and the observations: Bias = $(\mu_{\text{model}} - \mu_{\text{obs}})/\mu_{\text{obs}}$. Clearly, among the considered distributions, the beta generalized Lindley distribution has the closest distances to the reference point $(1,0)$ at all heights, which verifies that the beta generalized Lindley distribution has the best performance. } 
    \label{fig:normalizedWTaylor}
\end{figure}

Figure \ref{fig:normalizedWTaylor} shows an example of the normalized Wasserstein-Taylor diagram for wind speed distribution fitting at heights 10, 20, 50, and 80 meters of the Flatirons M2 meteorological tower in the years 2010-2020, and the details about the dataset and distribution parameter estimates can be found in Table 4 of \cite{yang2025beta}. 
As suggested by Equation ~\eqref{eqn:normalizedWTaylor1}, by finding the normalized centered $2$-Wasserstein distance, normalized standard deviations, and quantile correlation coefficients, one can plot the associated diagram satisfying the law of cosines. 
Figure \ref{fig:normalizedWTaylor} shows that among the considered distributions, the beta generalized Lindley distribution has the closest distances to the reference point $(1,0)$ at all heights, which verifies that the beta generalized Lindley distribution is best according to the centered $W_2$ criterion and aligns with the best result in \cite{yang2025beta} which used a variety of goodness-of-fit measures.

\section{Summary and Discussion}\label{sec:summary}
In this work, we modify the framework of the Taylor diagram using the $2$-Wasserstein distance to create the novel Wasserstein-Taylor diagram for model evaluation at the probability distribution level.
 In Lemma \ref{WTaylorDiagram}, we show that the law of cosines still holds in the Taylor diagram if the centered root-mean-squared error (centered RMSE) and product-moment correlation coefficient are replaced by the centered 2-Wasserstein distance  and quantile correlation coefficient, respectively. 
 The Taylor diagram needs observed and predicted data with the same cardinality, so that one can calculate the centered root-mean square error.
However, the Taylor diagram is not applicable when the observed and predicted data sets have different sample sizes, when only observed data are available---as in the case of fitting a statistical distribution to observed wind speed data---or when the observed and predicted data sets are not paired index-by-index.
In these cases one can use the Wasserstein-Taylor diagram since the computation of centered  $2$-Wasserstein distance and quantile correlation coefficient only needs the empirical distribution of observed data and model distributions. 
Therefore, the Wasserstein-Taylor diagram can use the $2$-Wasserstein distance to compare observed and predicted data sets of different sizes from the distribution perspective and even to compare statistical models only with observed data. 
Proposition~\ref{prop:sort} shows that the quantile correlation coefficient of measures assigning $1/N$ times a Dirac delta to points in data sequences of cardinality $N$ is the maximum product-moment correlation coefficient of all of the permutations of those sequences. Then in Corollary \ref{WTaylorAsTayloe}, we show that the Wasserstein-Taylor diagram will recover the original Taylor diagram if given samples are comonotone. Finally, we show in Theorem~\ref{thm:samplestat} that the quantile correlation coefficients of empirical distributions converge to the quantile correlation of the underlying ground truth distributions.

In summary, the Taylor diagram compares the spatial or temporal alignment of data while the Wasserstein-Taylor diagram compares the similarity from the probability distribution perspective. 
Both methods rely on the data / measures being centered first.
Furthermore, as proved in Proposition~\ref{prop:sort} and claimed in  Corollary \ref{WTaylorAsTayloe},  the Wasserstein-Taylor diagram will recover the original Taylor diagram if samples $(x_i)_{i=1}^N$ and $(y_i)_{i=1}^N$  from $\mu, \nu \in \mathcal{P}_2(\mathbb{R})$  are comonotone: $(x_i-x_j)(y_i-y_j) \geq 0$ for any $i$ and $j$, which is often seen in the cumulative climate variables indexed by time such as cumulative precipitation, runoff, and solar radiation. 
We also introduce the so-called normalized Wasserstein-Taylor diagram to compare the relative performance of distributions for
several variables in a single diagram. 

Recall the decomposition of $W_2$ in Equation ~\eqref{eq:W2Decomposition} from~\cite{irpino2007optimal}.
In Lemma~\ref{WTaylorDiagram}, we required the compared distributions to be centered in order to get a law of cosines, which implicitly meant that we only focused on comparing the size and shape of the distributions.
A natural modification of Wasserstein-Taylor diagrams would be to indicate information comparing locations of the non-centered distributions  (i.e., $m_\mu - m_\nu$) using different sizes or face colors for the markers.

\section*{Data Availability}
The code and data used in this paper can be found at \url{https://github.com/DowellChan/Wasserstein_Taylor_Diagram}.

\section*{Acknowledgment}
DC would like to thank Zhiang Xie, Puxi Li, and Senfeng Liu for helpful discussions.  
EJK and HY were supported by the U.S. National Science Foundation (NSF) Collaborations 
in Artificial Intelligence and Geosciences (CAIG) program (Grant number: 2425923). 
EJK used ChatGPT to modify Figure~\ref{fig:Taylor-example}, generate Figure~\ref{fig:lawcos}, look up literature, and provide feedback on the penultimate draft. 
While using ChatGPT to search for literature on results concerning quantile correlation, EJK discovered the key ideas of Theorem~\ref{thm:samplestat} and incorporated them into the paper.
All mathematical claims, citations, and computations were checked by the authors, who take full responsibility for the content of this paper.

\section*{Competing Interests}
The authors declare no competing interests.

\bibliographystyle{plain} 
\bibliography{refs}
\end{document}